\documentclass[draftcls,twoside,onecolumn,doublespace,journal,times,12pt]{IEEEtran}
\usepackage{graphicx}
\usepackage{amssymb}
\usepackage{amsmath}
\usepackage{cite}
\usepackage[acronym]{glossaries}
\newtheorem{lemma}{Lemma}
\newacronym{crn}{CRN}{cognitive radio network}
\newacronym{su}{SU}{secondary user}
\newacronym{su-tx}{SU-Tx}{secondary transmitter}
\newacronym{su-rx}{SU-Rx}{secondary receiver}
\newacronym{pu}{PU}{primary user}
\newacronym{pu-tx}{PU-Tx}{primary transmitter}
\newacronym{pu-rx}{PU-Rx}{primary receiver}
\newacronym{qos}{QoS}{quality of service}
\newacronym{snr}{SNR}{signal-to-noise ratio}
\newacronym{sinr}{SINR}{signal-to-interference-plus-noise ratio}
\newacronym{fcc}{FCC}{Federal Communications Commission}
\newacronym{cdf}{CDF}{cumulative distribution function}
\newacronym{pdf}{PDF}{probability density function}
\newacronym{rv}{RV}{random variable}
\newacronym{inid}{i.n.i.d.}{independent but not necessarily identically distributed}
\newacronym{eav}{EAV}{eaversdropper}
\newacronym{sep}{SEP}{symbol error probability}
\newacronym{d2d}{D2D}{device-to-device }
\newcommand{\Power}[1]{{P}_{\text{#1}}}
\newcommand{\PowerAd}{\mathcal{P}}
\newcommand{\Ppeak}{{{P}_{\text{pk}}}}
\newcommand{\Pout}[1]{{P}^{\text{#1}}_{\text{out}}}
\newcommand{\SINR}[1]{\gamma _{\text{#1}}}
  
\title{Impact of Secondary User Communication on Security Communication of Primary User }
\author{
        {
          Louis Sibomana, Hung Tran, and Quang Anh Tran
        }
\thanks{
        Louis Sibomana is with the Blekinge Institute of Technology, SE-371 79 Karlskrona, Sweden,
(e-mail: {\tt lsm@bth.se}).
         }
\thanks{
        Hung Tran is with the Faculty of Information Technology, NIEM, 31 Phan Dinh Giot, Thanh Xuan District, Hanoi, Vietnam
(e-mail: {\tt tranhungemail@gmail.com}).
         }
\thanks{
        Quang Anh Tran is with the Information Technology Faculty, Hanoi University, Vietnam
(email:\tt{anhtq@hanu.edu.vn}).}
}

\begin{document}
\maketitle
\begin{abstract}
Recently, spectrum sharing has been considered as a promising solution to improve the spectrum utilization. It however may be vulnerable to security problems as the primary and secondary network access the same resource. Therefore,
in this paper, we focus on the performance analysis of a cognitive radio network in the presence of an eavesdropper (EAV) who illegally listens to the primary user (PU) communication in which the transmit power of the secondary transmitter (SU-Tx) is subject to the joint constraint of peak transmit power of the SU-Tx and outage probability
of the PU. Accordingly, an adaptive transmit power policy and an analytical expression of symbol error probability
are derived for the SU. Most importantly, security evaluations of primary network in terms of the probability of existence of non-zero secrecy capacity and outage probability of secrecy capacity are obtained. Numerical results reveal a fact that the security of the primary network does not only depends on the channel mean powers between primary and secondary networks, but also strongly depends on the channel condition of the SU-Tx$\rightarrow$EAV link and transmit power policy of the SU-Tx.
\end{abstract}
\begin{IEEEkeywords}
Cognitive radio network; symbol error probability; secrecy capacity; secrecy outage probability.
\end{IEEEkeywords}
\section{INTRODUCTION}\label{sec:1}
Recently, \gls{crn} has been considered as a feasible solution in improving the spectrum utilization\cite{Xtreport,SHK:05:IEEE_J_JSAC,Akyildiz2006} in which the spectrum underlay approach, one of spectrum access techniques in \gls{crn}, has received a lot attention in academia \cite{ASAII:09:IEEE,Khoshkholgh2010}. In particular, in underlay \gls{crn}, the \gls{su} is allowed to simultaneously access the licensed spectrum of the \gls{pu} as long as the inflicted interference at the \gls{pu} is kept below a predefined threshold \cite{SHK:05:IEEE_J_JSAC,Akyildiz2006,ASAII:09:IEEE}.
 To protect the communication of the \gls{pu}, different constraints such as outage constraint, peak or average interference power constraints, and  power allocation strategies for the \gls{su} have been studied \cite{XRYH:11:IEEE_J_JSAC,Zhang2009,Smith2013}. By doing so, the spectrum utilization have been improved significantly. Even though, spectrum underlay approach may reveal disadvantages in security issues of both secondary and primary networks. This is due to the broadcast nature of wireless signals.

  Generally, secure communication of wireless networks is typically achieved by using cryptographic protocols above the physical layer, but the signal may be decoded at the physical layer. As a consequence, information theoretic security at the physical layer has become one of the most concerned topics in wireless communication. More specifically, the problem of secure transmission  was studied from an information theoretic perspective for a wiretap channel \cite{Wyner1975}. The aim of information theoretic security is to provide a measurement that how much information can be transmitted safely by exploiting the physical characteristics of the wireless channel with the existence of \gls{eav} \cite{Barros2006,Bloch2008,Gopala2008,Dong2010}. In \cite{Bloch2008,Gopala2008}, a secrecy capacity concept has been introduced to evaluate the security level of transmitted messages as there exists an \gls{eav}, i.e., the secrecy capacity is defined as the maximum transmission rate at which a message can be reliably received by the legitimate receiver but kept perfectly secret from the \gls{eav}. This performance metric, i.e., ergodic capacity is suitable for delay tolerant applications. Otherwise, the  outage probability of secrecy capacity is suitable for delay-limited information applications \cite{Barros2006,Bloch2008}.
  In a spectrum sharing \gls{crn}, \gls{su} and \gls{pu} share the same frequency band, they may cause mutual interference as a missed power control or missed detection happen. Hence, the security in \gls{crn} becomes more challenging. In \cite{Shu2013}, authors have presented an overview on several existing security attacks to the physical layer in \gls{crn}, and then the  secrecy capacity and outage probability of secrecy capacity of the \gls{pu} have been analyzed. Most recently, an information theoretic secrecy for \gls{d2d} communication in cellular network has been considered in \cite{Yue2013}. Analytical results have illustrated that the \gls{d2d} communication is known as the interference source to the \gls{eav} which can improve the secrecy capacity for the considered system.
  In \cite{Wu2011}, authors have considered an information secrecy cooperative game where the \gls{pu} and \gls{su} cooperate and adjust their transmit powers to maximize the secrecy and information rates. In this context, the cooperation is adopted when the \gls{pu} achieves higher secrecy rate with the help of the \gls{su}. Otherwise, the \gls{pu} does not cooperate with the \gls{su}. However, \cite{Shu2013,Wu2011} did not consider the interference constraints of the \gls{su}. As a result, the \gls{qos} of primary network is not guaranteed due to the interference from the \gls{su} transmission.  Following the concerns of security for \gls{crn}, we have investigated the probability of existence of non-zero secrecy capacity for the \gls{pu} where  the \gls{su} transmit power control is subject to the maximum acceptable \gls{pu} outage constraint and the peak transmit power of the \gls{su} \cite{LSB:2013:WPMC}. However, the mathematical approach is rather complex and it is impossible to analyze further.

  To get rid of mathematical complexity in \cite{LSB:2013:WPMC} and differ from the aforementioned works, in this paper, we use another mathematic approach  to analyze the system performance and security. In particular, it is assumed that the \gls{su-tx}
     transmit power is subject to the joint constraint of the \gls{pu} outage and \gls{su} maximum transmit power limit. Accordingly, the power allocation policy for the \gls{su-tx}, the \gls{pdf}, and \gls{cdf} for the \gls{sinr} are obtained. On this basis, we do not only analyze the probability of existence of non-zero secrecy capacity for the \gls{pu} but also the outage probability of the secrecy capacity of the \gls{pu}. Furthermore, the performance of secondary network, which is subject to the interference from the \gls{pu}, in terms of the \gls{sep} is analyzed. The numerical results indicate that the probability of existence of non-zero secrecy capacity and outage probability of secrecy capacity of the \gls{pu} strongly depend on the channel conditions of the \gls{su-tx} to the \gls{eav} link and \gls{su-tx} adaptive transmit power policy. Most interestingly, the
     security of the \gls{pu} can be improved by the interference from the \gls{su-tx} to the \gls{eav}. To the best of our knowledge, there is no previous publication addressing on this problem.

    The remainder of this paper is organized as follows. In Section \ref{sec:II}, the system and channel model and assumptions are introduced. Section \ref{sec:III} derives the  \gls{su-tx} transmit power policy, the \gls{cdf} and the \gls{pdf} of the  received \gls{sinr}.  In Section \ref{sec:IV},  the \gls{su} \gls{sep}, the \gls{pu} probability of existence of non-zero secrecy capacity and outage probability of the secrecy capacity are analyzed.
    In Section \ref{sec:V}, numerical results and discussion are provided. Finally, conclusions are presented in Section \ref{sec:VI}.
\section{SYSTEM AND CHANNEL MODEL}\label{sec:II}
Let us consider a \gls{crn} model as shown in Fig.~\ref{fig:SystemModel} in which the \gls{su-tx} utilizes the licensed frequency band of the \gls{pu}  to communicate with the \gls{su-rx} receiver. There exists an \gls{eav}  who is capable of eavesdropping the signal sent by the \gls{pu-tx} by observing the channel output. Intuitively, we can see that the \gls{su-tx} and \gls{pu-tx} can cause mutual interference to the \gls{pu-rx} and \gls{su-rx}, respectively. The considered system is a typical model of the \gls{d2d} communication where the \gls{su-tx}$\rightarrow$\gls{su-rx} link  corresponds to the \gls{d2d} link while \gls{pu-tx}$\rightarrow$\gls{pu-rx} link is an instance of uplink or downlink in the cellular network \cite{AQV2014}. On the basis of Shannon theorem, the channel capacity between the \gls{pu-tx} and \gls{pu-rx} under interference caused by the \gls{su-tx} is formulated by
\begin{align}\label{eq:C_P}
    C_{P}=B\log_2 \left(1+ \SINR{P} \right)
\end{align}
where $B$ is system bandwidth and $\SINR {P}$ is the \gls{sinr} at the \gls{pu-rx}
 defined by
\begin{align}\label{eq:SINR_P}
  \SINR{P}=\frac{\Power{p} h}{\Power{s} \alpha + N_0}
\end{align}
while $\Power{p}$, $\Power{s}$, and $N_0$ are the transmit powers of the \gls{pu-tx}, \gls{su-tx}, and noise power, respectively. Further, symbols $h$ and $\alpha$ denote the channel power gains of the \gls{pu-tx}$\rightarrow$\gls{pu-rx} communication link and the \gls{su-tx}$\rightarrow$\gls{pu-rx} interference link, respectively.
Similarly, the  capacity between the \gls{su-tx} and \gls{su-rx} under interference caused by the \gls{pu-tx} can be expressed as
\begin{align}\label{eq:C_S}
    C_S =B\log_2\left(1+ \SINR{S} \right)
\end{align}
where $\SINR S$ is the \gls{sinr} at the \gls{su-rx} which is formulated as
\begin{align}\label{eq:SINR_S}
    \SINR S=\frac{ \Power{s} g }{ \Power{p}\beta +N_0}
\end{align}
here,  $g$ and $\beta$ are channel power gains for the \gls{su-tx}$\rightarrow$\gls{su-rx} communication link and \gls{pu-tx}$\rightarrow$\gls{su-rx} interference link, respectively. Due to the nature of broadcast signal in the wireless communication, the \gls{eav} may eavesdrop the transmitted information from the \gls{pu-tx} to the \gls{pu-rx}. However, the received information at the \gls{eav} is also subject to the interference caused by the
\gls{su-tx}. Thus, the capacity between the \gls{pu-tx} and \gls{eav} over the  wire-tap channel is presented as
\begin{align}\label{eq:C_E}
    C_E =B\log_2\left(1+ \SINR{E} \right)
\end{align}
where $\gamma_{E}$ is the \gls{sinr} at the \gls{eav} and defined as
\begin{align}\label{eq:SIRN_E}
    \gamma_{E}= \frac{\Power{p}  f}{ \Power {S} \varphi +N_0 }
\end{align}
In \eqref{eq:SIRN_E}, $f$ and $\varphi$ are, respectively,  channel power gains of the \gls{pu-tx}$\rightarrow$\gls{eav} and \gls{su-tx}$\rightarrow$\gls{eav} links.

In this work, the channels are assumed to be block Rayleigh fading, i.e., the channel remains constant over one time slot, and may change independently from one slot to the next. This assumption is widely accepted in realistic models for wireless communications and is applicable for severe shadowing environment where the line-of-sight does not exist such as in crowed city with many high buildings. Moreover, we denote $\Omega_X$ as the channel mean gain where $X \in \{g,h,f,\alpha,\beta,\varphi\}$, i.e., the channel mean gains are non-identical. This is reasonable since users may be located at different positions. Accordingly, the channel power gains are \gls{inid} \glspl{rv} with exponential distribution given as follows:
\begin{align}
    \label{eq:ExponentialPDF}
    f_{X}(x)=\frac{1}{\Omega}\exp\left( -\frac{x}{\Omega} \right)
    \\
    F_{X}(x)=1-\exp\left(-\frac{x}{\Omega}\right)
\end{align}
where  $f_{X}(x)$ and $F_{X}(x)$ are  the \gls{pdf}  and \gls{cdf}  of the \gls{rv} $X$, respectively.
\subsection{Spectrum sharing constraints}
Given the considered system model, the \gls{su} try to utilize the licensed frequency band of the \gls{pu} for its communication, however this operation may cause unpredictable effects to the \gls{qos} of the \gls{pu} and it is an unacceptable issue in the \gls{pu}'s view. In order to not cause harmful interference to the \gls{pu-rx}, the interference constraints given by the \gls{pu} should be established, and the \gls{su} needs to have an appropriate power allocation policy to keep interference at the \gls{pu-rx} below a predefined threshold. In the light of this idea, the interference constraint given by the \gls{pu} can be interpreted into the outage probability constraint as  \cite{XRYH:11:IEEE_J_JSAC}
\begin{align}\label{eq:outageprobability}
   P_{\text{out}}^{ P}=\Pr\{B \log_2(1+\gamma_{P}) <  r_{p} \} \le \theta_{th}
\end{align}
where $r_p$ and $\theta_{th}$ are, respectively, the primary network target transmission rate and outage probability threshold. The equation \eqref{eq:outageprobability} can interpret by words that the \gls{su-tx} is allowed to access the licensed frequency band of the \gls{pu} and cause limited interference to the \gls{pu} as long as the outage probability of the \gls{pu} capacity is kept below a predefined threshold, $\theta_{th}$. Furthermore, the  transmit power is limited in reality, thus the \gls{su-tx} transmit power is subject to additional constraint, named as the peak transmit power or maximum transmit power limit, as
\begin{align}\label{eq:PeakPowerConstraint}
  \Power{s} \le \Ppeak
\end{align}
\subsection{Performance Metrics}
Based on the transmit power constraint given by the \gls{pu}  and \gls{su}, our aim of this paper is to investigate the system performance of \gls{crn} by calculating the \gls{su} \gls{sep} and analyze the impact of the presence of the \gls{su} on the security communication of the \gls{pu}.
\subsubsection{Symbol Error Probability}
According to \cite{MKAGIC:07:IEEE_J_WCOM},   the \gls{sep} of the \gls{su} is characterized as
 \begin{align}\label{eq:SEP_Define}
    P_e=\frac{\epsilon \sqrt {\eta} }{ 2 \sqrt {\pi}} \int \limits_{0}^{\infty} F_{\SINR{S}} (\gamma) \frac{\exp\left(- \eta \gamma \right)}{\sqrt{\gamma}} d \gamma
 \end{align}
 where $\epsilon$ and $\eta$ are constants which depends on the particular modulation scheme. For example,  for $M$-phase shift keying ($M-PSK$) modulation scheme, $\epsilon=2$ and $\eta=\sin^2 (\pi/M)$.
\subsubsection{Probability of Existence of a Non-zero Secrecy Capacity}
To analyze the secure communication of the PU under the interference from the SU, we employed the instantaneous secrecy capacity. According to results reported in \cite{Barros2006},  the  secrecy capacity of the primary communication is formulated as
\begin{align}\label{eq:SecureCapacity}
        C_{\text{sec}}  = \left\{ {\begin{array}{*{20}c}
               {B\log_2(1+\SINR{P}) - B\log_2(1+\SINR{E}),~  \SINR{P} \ge \SINR{E} }\\
               {0,~~~~~~~~~~~~~~~~~~~~~~~~~~~~~~~~~~~~~~~\SINR{P} < \SINR{E} }  \\
            \end{array}} \right.
    \end{align}
Accordingly, the probability of existence of a non-zero secrecy capacity of the PU is expressed as
\begin{align}\label{eq:Nonzero_Secrecy_Capacity}
        P_{\text{ex}}=\Pr\left\{ C_{\text{sec}} >0 \right\}=\Pr\left\{\gamma_P  >\gamma_E\right\}
\end{align}
\subsubsection{Outage Probability of Secrecy Capacity}
The outage probability of secrecy capacity is defined as the probability that the instantaneous secrecy capacity is less than a  secrecy target rate $R_s>0$. Thus, the outage probability of secrecy capacity for the primary network is given by
\begin{align}\label{eq:outage_Secrecy_Capacity}
        P_{\text{out,sec}}=\Pr\left\{C_{\text{sec}} < R_s \right\}
\end{align}
According to \cite[Eq.(6)]{HRMM:12:IEEE_J_SPL}, this performance metric can be expanded by using the total probability theorem as
    \begin{align}\label{eq:OutageProb_Secrecy_Capacity}
        P_{\text{out,sec}}=\Pr\{C_{\text{sec}}<R_s| \gamma_{P} > \gamma_{E} \} \Pr\{ \gamma_{P} > \gamma_{E} \}
        + \Pr\{C_{\text{sec}}<R_s| \gamma_{P} \le \gamma_{E} \}\Pr\{ \gamma_{P} \le \gamma_{E} \}
    \end{align}
\section{STATISTICS FUNCTIONS}\label{sec:III}
In this section, we derive the power allocation policy for the \gls{su}. Thereafter,
the \gls{cdf} and \gls{pdf} for different \gls{sinr} are obtained. Let us commence by deriving the \gls{cdf} and \gls{pdf} of a function of \glspl{rv} which are important to analyze the system performance in next subsections.
\begin{lemma}\label{lemma1}
        Assuming that $a$ and $b$ are positive constants while $X_1$ and $X_2$ are independent exponentially distributed \glspl{rv} with mean values $\Omega_1$ and $\Omega_2$, respectively. A \gls{rv} $Z$ is defined by
        \begin{align}\label{eq:Z}
           Z=\frac{a X_1}{b X_2+1}
        \end{align}
        The \gls{cdf} and \gls{pdf} of $Z$ are formulated, respectively, as follows:
            \begin{align}
                \label{eq:CDF_Z}
                F_Z(z)&=1-\frac{1}{1 +z \frac{b \Omega_2}{a \Omega_1 }} \exp\left( -\frac{z}{a \Omega_1} \right)
                \\
                \label{eq:PDF_Z}
                f_Z(z)&=
                    \frac{b\Omega_2}{a \Omega_1}
                    \frac{\exp\left( -\frac{z}{a \Omega_1 } \right)}{ \left( 1+ z\frac{b \Omega_2}{a \Omega_1} \right)^2 }
                +
                    \frac{\exp\left( - \frac{z}{a \Omega_1} \right)}{a \Omega_1 \left( 1+ z\frac{b \Omega_2}{a \Omega_1} \right) }
            \end{align}
        \end{lemma}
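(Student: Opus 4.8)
The plan is to compute the CDF $F_Z$ first, by conditioning on $X_2$, and then obtain the PDF $f_Z$ by differentiating the resulting closed form with respect to $z$.

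First I would rewrite the defining event in \eqref{eq:Z}. Since $a,b>0$ and $bX_2+1\ge 1>0$ almost surely, the inequality $Z\le z$ is equivalent to $X_1\le z(bX_2+1)/a$. Conditioning on $X_2=x_2$ and invoking the exponential CDF of $X_1$ from \eqref{eq:ExponentialPDF}, one has, for every $z\ge 0$ and $x_2\ge 0$, $\Pr\{X_1\le z(bx_2+1)/a \mid X_2=x_2\}=1-\exp(-z(bx_2+1)/(a\Omega_1))$. Here I would point out that because $z\ge 0$ and $x_2\ge 0$ the argument $z(bx_2+1)/a$ is nonnegative, so this expression is valid over the whole range of $x_2$ and no indicator/truncation is needed; the independence of $X_1$ and $X_2$ is what makes the conditioning clean.

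Next I would remove the conditioning by averaging against the exponential PDF of $X_2$, which gives
\begin{align}
F_Z(z)=\int_0^{\infty}\left[1-\exp\left(-\frac{z(bx_2+1)}{a\Omega_1}\right)\right]\frac{1}{\Omega_2}\exp\left(-\frac{x_2}{\Omega_2}\right)dx_2 .
\end{align}
Pulling the factor $\exp(-z/(a\Omega_1))$ out of the integral leaves the elementary integral $\frac{1}{\Omega_2}\int_0^{\infty}\exp\!\left(-x_2\left(\frac{zb}{a\Omega_1}+\frac{1}{\Omega_2}\right)\right)dx_2=\left(1+\frac{zb\Omega_2}{a\Omega_1}\right)^{-1}$, and collecting the terms yields exactly \eqref{eq:CDF_Z}.

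Finally, to obtain the PDF I would differentiate $F_Z$ with respect to $z$. Writing $F_Z(z)=1-u(z)v(z)$ with $u(z)=\left(1+z\frac{b\Omega_2}{a\Omega_1}\right)^{-1}$ and $v(z)=\exp(-z/(a\Omega_1))$, the product rule gives $f_Z(z)=-u'(z)v(z)-u(z)v'(z)$; substituting $u'(z)=-\frac{b\Omega_2}{a\Omega_1}\left(1+z\frac{b\Omega_2}{a\Omega_1}\right)^{-2}$ and $v'(z)=-\frac{1}{a\Omega_1}\exp(-z/(a\Omega_1))$ reproduces the two-term expression in \eqref{eq:PDF_Z}. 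There is no genuine obstacle in this argument; the only care needed is bookkeeping of the constants $a$, $b$, $\Omega_1$, $\Omega_2$ and the observation above that the exponential CDF applies on the whole integration range. An alternative route would be to derive $f_Z$ directly via a change of variables followed by convolution over the density of $X_2$, but differentiating the closed-form CDF is shorter and avoids splitting the density into its two summands by hand.
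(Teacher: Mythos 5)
Your argument is correct and is essentially the paper's own proof: condition on $X_2$, apply the exponential CDF of $X_1$, average over $f_{X_2}$ as in \eqref{eq:CDF_Z_Define}--\eqref{eq:CDF_Z_Derive_1}, evaluate the elementary exponential integral to get \eqref{eq:CDF_Z}, and differentiate to obtain \eqref{eq:PDF_Z}. No gaps; the constant bookkeeping and the nonnegativity remark are all that is needed.
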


        \begin{proof}
                    According to the probability definition, the \gls{cdf} of the \gls{rv} $Z$ can be derived by using the same approach \cite[Eq.(14)]{YHVL:08:ICC} as follows
                    \begin{align}\label{eq:CDF_Z_Define}
                        F_Z(z)=\Pr\left\{ Z <z \right\} =\int \limits_{0}^{\infty} \Pr\left\{  X_1 <  \frac{ z(b x+1 )}{a} \right\} f_{X_2} (x) dx
                    \end{align}
                    As $X_1$ and $X_2$ are independent exponentially distributed \glspl{rv}, the equation \eqref{eq:CDF_Z_Define} can be rewritten as follows
                    \begin{align}\label{eq:CDF_Z_Derive_1}
                    F_Z(z)=\int \limits_{0}^{\infty} \left\{ 1- \exp\left[-\frac{ z(b x+1 )}{a\Omega_1} \right] \right\} \frac{1}{\Omega_2} \exp\left(- \frac{x}{\Omega_2} \right)dx
                    \end{align}
                    After integration, the \gls{cdf} of $Z$ is obtained as in \eqref{eq:CDF_Z}. Then, by differentiating \eqref{eq:CDF_Z} with respect to $z$, we obtain the \gls{pdf} of $Z$ as shown in \eqref{eq:PDF_Z}.
            \end{proof}
\subsection{Power Allocation Policy of the \gls{su-tx}}
As the \gls{su} accesses the licensed frequency band of the \gls{pu}, the \gls{su-tx} must have a flexible transmit power policy to keep the interference of the \gls{pu} below a predetermined threshold. From  \eqref{eq:outageprobability}, we derive the outage probability of the \gls{pu} to withdraw the transmit power expression of the \gls{su} as
\begin{align}\label{eq:OutagePU}
    \Pout{P}=\Pr\left\{ \frac{\Power{p} h}{ \Power{s} \alpha + N_0} < \gamma_{th} \right\}
\end{align}
where $\gamma_{th}=2^{\frac{r_p}{B}}-1$. Using the Lemma \ref{lemma1}, an expression for the \gls{pu} outage probability is  presented as
\begin{align}\label{eq:ClosedFormOutageProb_PU}
    \Pout{P}=1-\frac{\Power{p}\Omega_{h}}{\gamma_{th}\Power{s}\Omega_{\alpha} +\Power{p}\Omega_{h}} \exp\left(-\frac{N_0 \gamma_{th}}{\Power{p}\Omega_{h} } \right)
\end{align}
Substituting \eqref{eq:ClosedFormOutageProb_PU} into \eqref{eq:outageprobability} and then combining with \eqref{eq:PeakPowerConstraint} yields an adaptive transmit power policy of the \gls{su-tx} as
\begin{align}\label{eq:AdaptivePowerofSU-Tx}
    \PowerAd=\min \left\{\frac{\Power{p}\Omega_{h}}{\gamma_{th} \Omega_{\alpha}}\chi ^{+}  ,\Ppeak \right\}
\end{align}
where
\begin{align}
    \chi^{+}&=\max\left\{\frac{1}{1-\theta_{th}}\exp\left(-\frac{N_0 \gamma_{th}}{\Power{p}\Omega_{h} } \right)-1,0\right\}
 \end{align}
In what follows, the \gls{su-tx} uses the power allocation policy given in \eqref{eq:AdaptivePowerofSU-Tx} to transmit the signal to the \gls{su-rx}.
\subsection{Statistics for \glspl{sinr}}
By looking into the considered performance metrics given in \eqref{eq:SEP_Define}, \eqref{eq:Nonzero_Secrecy_Capacity}, and \eqref{eq:OutageProb_Secrecy_Capacity}, we can see that the
 \gls{cdf} and \gls{pdf} for \glspl{sinr} are important functions to analyze the system performance. Therefore, we derive these functions as follows:

Using the power allocation policy given in \eqref{eq:AdaptivePowerofSU-Tx} and setting
$c=\frac{P_p}{N_0}$ and $d=\frac{\PowerAd}{N_0}$ as the \glspl{snr}, the \glspl{sinr} at the \gls{su-rx}, \gls{pu-rx}, and \gls{eav} given respectively in \eqref{eq:SINR_S}, \eqref{eq:SINR_P}, and \eqref{eq:SIRN_E} are rewritten as
\begin{align}
    \label{eq:gammaPU_Rewrite}
        \SINR {P}&= \frac{c  h}{ d \alpha +1 }
        \\
    \label{eq:gammaEAV_Rewrite}
        \SINR {E}&= \frac{c  f}{ d \varphi  +1 }
        \\
    \label{eq:gammaSU_Rewrite}
    \SINR {S} &=\frac{d g    }{c \beta  + 1}
\end{align}

\subsubsection{\gls{cdf} and \gls{pdf} of $\SINR {P}$}
Using $\emph{Lemma 1}$, the \gls{cdf} and \gls{pdf} of $\SINR {P}$ can be obtained by setting
$a=c$, $b=d$, $\Omega_1=\Omega_h$ and $\Omega_2=\Omega_{\alpha}$ as follows:
\begin{align}\label{eq:CDF_X}
    &F_{\gamma_P}(x)= 1-\frac{1}{1 + x A_0} \exp\left( -\frac{x}{B_0} \right)
    \\
    \label{eq:PDF_X}
    &f_{\gamma_P}(x)= \exp\left( -\frac{x}{B_0 } \right)\left[
            \frac{ A_0}{\left( 1+ A_0 x \right)^2 }
    +
        \frac{1}{B_0 \left( 1+ A_0  x \right) }\right]
\end{align}
where $A_0= \frac{d \Omega_{\alpha}}{c \Omega_h }$ and $\frac{1}{B_0} = \frac{1}{c\Omega_{h}}$.

\subsubsection{\gls{cdf} and \gls{pdf} of $\SINR {E}$}
Similarly, the \gls{cdf} and \gls{pdf} of $\SINR {E}$ are, respectively, obtained by setting $a=c$, $b=d$, $\Omega_1=\Omega_f$ and $\Omega_2=\Omega_{\varphi}$ as
\begin{align}
    \label{eq:CDF_Y}
    &F_{\gamma_E}(y)=
    1-\frac{1}{1 + y D_0} \exp\left( -\frac{y}{E_0} \right)
    \\
    \label{eq:PDF_Y}
    &f_{\gamma_E}(y)=\exp\left( -\frac{y}{E_0} \right)\left[
        \frac{D_0}{\left( 1+ D_0 y \right)^2 }
    +
        \frac{1}{E_0 \left( 1+ D_0 y \right) }\right]
\end{align}
where $D_0=\frac{d \Omega_{\varphi}}{c \Omega_f }$ and $\frac{1}{E_0}=\frac{1}{c \Omega_f}$.
\subsubsection{\gls{cdf} and \gls{pdf} of $\SINR {S}$}
By setting $a=d$, $b=c$, $\Omega_1=\Omega_g$ and $\Omega_2=\Omega_{\beta}$, we also obtain the \gls{cdf} and \gls{pdf} of $\SINR {S}$ as
\begin{align}
\label{eq:CDF_U}
    &F_{\gamma_S}(u)= 1-\frac{1}{1 + u F_0} \exp\left( -\frac{u}{G_0} \right)
    \\
    \label{eq:PDF_X}
    &f_{\gamma_S}(u)=\exp\left( -\frac{u}{G_0 } \right)\left[
            \frac{ F_0}{ \left( 1+ F_0 u \right)^2 }
    +
        \frac{1}{G_0 \left( 1+ F_0  u \right) }\right]
\end{align}
where $F_0= \frac{c \Omega_{\beta}}{d \Omega_{g} }$ and $\frac{1}{G_0} = \frac{1}{d \Omega_{g}}$.
\section{PERFORMANCE ANALYSIS}\label{sec:IV}
In this section, adopting the obtained transmit power policy given in \eqref{eq:AdaptivePowerofSU-Tx}, the \gls{sep} of the \gls{su}, analytical expressions for the probability of existence of non-zero secrecy capacity, and outage probability of secrecy capacity of the primary network are derived.
\subsection{Symbol Error Probability of the \gls{su}}
By substituting \eqref{eq:CDF_U} into \eqref{eq:SEP_Define}, an expression of the \gls{su} \gls{sep} can be presented as
\begin{align}\label{eq:SEP_Rewrite}
    P_e=
    \underbrace{ \frac{\epsilon \sqrt{\eta}}{2 \sqrt{\pi}} \int \limits_{0}^{\infty} \frac{\exp\left( -\eta \gamma \right)}{\sqrt{\gamma}}d\gamma} \limits_{H_1}
     -
     \underbrace{
     \frac{\epsilon \sqrt{\eta}}{2 \sqrt{\pi}} \int \limits_{0}^{\infty} \frac{1}{(1+F_0 \gamma)\sqrt{\gamma}}  \exp\left(- \frac{\gamma}{F_1} \right) d\gamma
     }\limits_{H_2}
\end{align}
where $\frac{1}{F_1}=\frac{1}{G_0} +\eta$. Moreover, using \cite[Eq. (3.361.2)]{GradshteynRyzhik2007},  $H_1$ is  given by
\begin{align}
\label{eq:H_1}
H_1&=\frac{\epsilon}{2}
\end{align}
Furthermore, by changing variable and setting $t= \gamma+\frac{1}{F_0}$, $H_{2}$ is  obtained as
\begin{align}\label{eq:H_2}
    H_2&=\frac{\epsilon \sqrt{\eta}}{2 \sqrt{\pi}} \frac{1}{F_0}\exp\left( \frac{1}{F_0 F_1} \right)\int \limits_{\frac{1}{F_0}}^{\infty} \frac{1}{t\sqrt{t-\frac{1}{F_0}}} \exp\left(- \frac{t}{F_1} \right) dt
    \nonumber\\
    &=\frac{ \epsilon}{2} \sqrt{\frac{\eta \pi }{F_0}}\exp \left(\frac{1}{F_0 F_1}\right)\left[1-
    \mathcal{Q}
     \left( \frac{1}{\sqrt{F_0 F_1} } \right)\right]
\end{align}
where \eqref{eq:H_2} is solved with the  help of \cite[Eq. (3.363.2)]{GradshteynRyzhik2007} and $\mathcal{Q}(\cdot)$ is the error function defined as $\mathcal{Q}(z)=(2/\sqrt{\pi})\int \limits_{0}^{z} \exp{(-t^2)}dt$.

As a consequence, the analytical expression of the \gls{su} \gls{sep} is given by
\begin{align}
    P_e=
    \frac{\epsilon}{2}
    -
    \frac{ \epsilon}{2} \sqrt{\frac{\eta \pi }{F_0}}\exp \left(\frac{1}{F_0 F_1}\right)\left[1-
    \mathcal{Q}
     \left( \frac{1}{\sqrt{F_0 F_1} } \right)\right]
\end{align}
\subsection{Analysis of Secure Communication of the \gls{pu}}
In this subsection, analytical expressions of the probability of existence of non-zero secrecy capacity and outage  probability of secrecy capacity of the PU are obtained.
\subsubsection{Probability of Existence of Secrecy Capacity}
According to the margin probability definition, we can derive the probability of existence of non-zero secrecy capacity for the \gls{pu} given in \eqref{eq:Nonzero_Secrecy_Capacity}  as follows:
    \begin{align}\label{eq:Prob_Nonzero_Secrecy_Capacity_Rewrite}
        P_{\text{ex}}&=1- \int \limits_{0}^{\infty} \Pr\left\{ \SINR {P}  < y\right\} f_{\SINR {E}} (y)dy = 1-\int \limits_{0}^{\infty} \frac{1}{1 + y A_0} \exp\left( -\frac{y}{B_0} \right)
        f_{\SINR {E}} (y)dy
        \nonumber\\
        &=
        D_0 \int \limits_{0}^{\infty}
        \frac{\exp\left[ -\left(\frac{1}{B_0} +\frac{1}{E_0 } \right)y \right]}
        {\left(1 + y A_0 \right)\left( 1+ D_0 y \right)^2}
        dy
        +
        \frac{1}{E_0}
        \int \limits_{0}^{\infty}
        \frac{\exp\left[ -\left(\frac{1}{B_0} +\frac{1}{E_0} \right)y \right]}
        { \left( 1+ A_0 y \right)\left( 1 + y D_0 \right) }dy
    \end{align}
where $f_{\SINR {E}} (y)$ is given in \eqref{eq:PDF_Y}.   By setting $\frac{1}{C_0}=\frac{1}{B_0} +\frac{1}{E_0}$, we can rewrite \eqref{eq:Prob_Nonzero_Secrecy_Capacity_Rewrite}  as
    \begin{align}\label{eq:Test}
         P_{\text{ex}}=
         \underbrace
        {
        D_0 \int \limits_{0}^{\infty}
        \frac{\exp\Big( -\frac{y}{C_0} \Big)}
        {\left(1 + y A_0 \right)\left( 1+ D_0 y \right)^2}
        dy
        } \limits_{I_1}
         +
         \underbrace
        {
          \frac{1}{E_0}
        \int \limits_{0}^{\infty}
        \frac{\exp\Big( - \frac{ y}{C_0} \Big)}
        { \left( 1+ A_0 y \right)\left( 1 + y D_0 \right) }dy
        } \limits_{I_2}
    \end{align}
Moreover, $I_1$ and $I_2$ can be solved as follows:
\begin{itemize}
    \item If $A_0=D_0$, the integrals $I_1$ and
    $I_2$  can be calculated with the help of \cite[Eq. (3.353.2)]{GradshteynRyzhik2007} and \cite[Eq.(3.353.3)]{GradshteynRyzhik2007}, respectively, as
        \begin{align}\label{eq:I_1_A=B}
            I_1&= D_0
            \int \limits_{0}^{\infty}
            \frac{\exp\left[ - \frac{y}{C_0} \right]}
            {\left(1 + y D_0 \right)^3}
            dy=
            \frac{C_0 D_0-1}{2C_0 D_0}+\frac{1 }{2 C_0^2 D_0^2} \exp\left(\frac{1}{C_0D_0}\right) \Gamma\left[0,\frac{1}{C_0 D_0}\right]
             \\
             \label{eq:I_2_A=B}
             I_2&=
             \frac{1}{E_0}
             \int \limits_{0}^{\infty}
            \frac{\exp\left[ - \frac{ y}{C_0} \right]}
            { \left( 1+ D_0 y \right)^2}dy
        =
        \frac{1}{E_0 D_0}+ \frac{1}{C_0 D_0^2 E_0} \exp\left({\frac{1}{C_0 D_0}}\right)   \text{Ei}\left(-\frac{1}{C_0D_0}\right)
        \end{align}
        where $\text{Ei}(z)=-\int\limits_{-z}^{\infty } \frac{e^{-t}}{t} dt$ is the exponential integral and $\Gamma\left[0,z\right]=-\text{Ei}(-z)$ for $z>0$ is the incomplete gamma function.

    \item If $A_0\neq D_0$,  $I_1$ is derived as
        \begin{align} \label{eq:I_1_A<>B}
        I_1 &=
            \underbrace{ \frac{A_0^2 D_0 }{(D_0-A_0)^2}\int \limits_{0}^{\infty} \frac{\exp\left(-\frac{y}{C_0}\right)}{ 1+A_0 y}dy
            } \limits_ {I_{11}}
            +
            \underbrace{
            \frac{D_0^2}{D_0-A_0}
            \int \limits_{0}^{\infty}
            \frac{ \exp\left(-\frac{y}{C_0}\right)}{ (1+D_0 y)^2}dy
            }
             \limits_ {I_{12}} \nonumber\\
        &
            -
        \underbrace{
            \frac{A_0D_0^2}{(D_0-A_0)^2} \int \limits_{0}^{\infty}\frac{\exp\left(-\frac{y}{C_0}\right)}{1+D_0 y}dy }
            \limits_ {I_{13}}
        \end{align}
where the integrals $I_{11}$ and $I_{13}$ are solved using \cite[Eq. (3.352.4)]{GradshteynRyzhik2007} as
        \begin{align}
        \label{eq:I_11_ClosedForm}
        I_{11}&=
        \frac{A_0 D_0}{(D_0-A_0)^2} \exp\left(\frac{1}{A_0 C_0}\right) \Gamma\left[0,\frac{1}{A_0 C_0}\right]
        \\
        \label{eq:I_13_ClosedForm}
        I_{13}&=
        \frac{A_0 D_0}{(D_0-A_0)^2} \exp\left(\frac{1}{C_0 D_0}\right) \Gamma\left[0,\frac{1}{C_0 D_0}\right]
        \end{align}
Furthermore, with the help of \cite[Eq.(3.353.3)]{GradshteynRyzhik2007}, we obtain an  expression for $I_{12}$ as
        \begin{align}
        \label{eq:I_12_ClosedForm}
            I_{12}&= \frac{D_0}{D_0-A_0} + \frac{1}{C_0 (D_0-A_0)}\exp\left(\frac{1}{C_0 D_0}\right)\text{Ei}\left[-\frac{1}{C_0 D_0}\right]
        \end{align}
In addition, when $A_0\neq D_0$, $I_2$ is calculated as
        \begin{align} \label{eq:I_2_A<>B}
        I_2&= \underbrace{\frac{A_0}{E_0(A_0-D_0)}
         \int \limits_{0}^{\infty}
         \frac{\exp\left[-\frac{y}{C_0}\right]}{1+A_0 y}dy
         }\limits_{I_{21}}
         -
         \underbrace{
         \frac{D_0}{E_0(A_0-D_0)}
          \int \limits_{0}^{\infty}
         \frac{\exp\left[-\frac{y}{C_0}\right]}{1+D_0 y}dy
         }\limits_{I_{22}}
        %
         \end{align}
where the expressions of $I_{21}$ and $I_{22}$ are obtained as
        \begin{align}
        I_{21}&=
        \frac{1}{E_0(A_0-D_0)}
        \exp\left(\frac{1}{A_0 C_0}\right) {\Gamma}\left[0,\frac{1}{A_0 C_0}\right]
        \\
        I_{22}&=
        \frac{1}{E_0(A_0-D_0)}
        \exp\left(\frac{1}{C_0 D_0}\right) {\Gamma}\left[0,\frac{1}{C_0 D_0}\right]
        \end{align}
\end{itemize}
Finally, we obtain an analytical expression of probability of existence of secrecy capacity of the \gls{pu}  as
\begin{itemize}
\item For $A_{0}=D_{0} $,
\begin{align}\label{eq:allPexist}
        P_{\text{ex}}&=\frac{C_{0}D_{0}-1}{2 C_{0}D_{0}}+\frac{1}{D_{0}E_{0}}+\frac{1}{2C_{0}^{2}D_{0}^{2}}\exp\Big(\frac{1}{C_{0}D_{0}}\Big)\Gamma\left[0,\frac{1}{C_0 D_0}\right]\nonumber\\
      &+ \frac{1}{C_{0}D_{0}^{2}E_{0}}\exp\Big(\frac{1}{C_{0}D_{0}}\Big)\text{Ei}\left(-\frac{1}{C_0D_0}\right)
 \end{align}
 \item For $A_{0}\neq D_{0} $,
\begin{align}\label{eq:allPexisttwo}
        P_{\text{ex}}&=\frac{A_{0}D_{0}}{\Big( D_{0}-A_{0}\Big)^{2}}\left\{\exp\Big(\frac{1}{A_{0}C_{0}}\Big)\Gamma\left[0,\frac{1}{A_0 C_0}\right]-\exp\Big(\frac{1}{C_{0}D_{0}}\Big)\Gamma\left[0,\frac{1}{C_0 D_0}\right]\right\}\nonumber\\
        &+\frac{D_{0}}{D_{0}-A_{0}}+\frac{1}{C_{0}\big(D_{0}-A_{0}\big)}\exp\Big(\frac{1}{C_{0}D_{0}}\Big)
        \text{Ei}\left(-\frac{1}{C_0D_0}\right)\nonumber\\
        &+\frac{1}{E_{0}\big(A_{0}-D_{0}\big)}\left\{\exp\Big(\frac{1}{A_{0}C_{0}}\Big)\Gamma\left[0,\frac{1}{A_0 C_0}\right]-\exp\Big(\frac{1}{C_{0}D_{0}}\Big)\Gamma\left[0,\frac{1}{C_0 D_0}\right]\right\}
    \end{align}
 \end{itemize}
\subsubsection{Outage Probability of Secrecy Capacity}
The probability of outage of the secrecy capacity of the \gls{pu} in \eqref{eq:OutageProb_Secrecy_Capacity} can be rewritten as
    \begin{align}\label{eq:OutageProb_Secrecy_Capacitytwo}
        P_{\text{out,sec}}&=\underbrace{\Pr\{C_{\text{sec}}<R_s| \gamma_{P} > \gamma_{E} \}} \limits _{J_1} \Pr\{ \gamma_{P} > \gamma_{E} \}\nonumber\\
        &
        + \underbrace{\Pr\{C_{\text{sec}}<R_s| \gamma_{P} \le \gamma_{E} \}\Pr\{ \gamma_{P} \le \gamma_{E} \}} \limits _{J_2}
    \end{align}
where  $\Pr\{ \gamma_{P} > \gamma_{E} \}=P_{\text{ex}}$ and $\Pr\{C_{\text{sec}}<R_s| \gamma_{P} \le \gamma_{E} \}=1$ since $R_{s}>0$. Accordingly,  $J_2$ is given by
\begin{align}
 J_2&= \Pr\{ \gamma_{P} \le \gamma_{E} \} = 1- \Pr\{ \gamma_{P} > \gamma_{E} \} =1 -P_{\text{ex}}
 \end{align}

Furthermore, we derive $J_1$ by using the Bayes's law as follows:
    \begin{align}\label{eq:J_1}
        J_1&=\Pr\left\{ \frac{1+\SINR{P}}{1+\SINR{E}} < \xi, \SINR{P} > \SINR{E} \right\}
           =\int \limits_{0}^{\infty}
           \int \limits_{y}^{\xi (1+y) -1}
           f_{\SINR{P}}(x)
            f_{\SINR {E}} (y) dxdy
           \nonumber\\
           &=\underbrace{\int \limits_{0}^{\infty} F_{\SINR{P}}\biggl(\xi (1+y) -1\biggr)f_{\SINR{E}}(y) dy } \limits_{J_{11}}
                -
            \underbrace{\int \limits_{0}^{\infty} F_{\SINR{P}}(y)f_{\SINR{E}}(y) dy}\limits_{J_{12}}
        \end{align}
where $\xi=2^{\frac{R_s}{B}}$. Substituting \eqref{eq:CDF_X}  into \eqref{eq:J_1}, we have
\begin{align}\label{eq:J_11}
    J_{11}&= \int \limits_{0}^{\infty} \left[ 1- \frac{1}{1+A_0 [\xi(1+y)-1] }  \exp\left( - \frac{\xi (1+y)-1}{B_0} \right) \right] f_{\SINR{E}}(y) dy
    \nonumber\\
    &=1-  \underbrace{\frac{\exp\left( - \frac{\xi -1}{B_0} \right)}{1+A_0(\xi -1)}
        \int \limits_{0}^{\infty}
        \frac{\exp\left( - \frac{\xi}{B_0} y \right)}
        {1+\frac{A_0 \xi }{1+A_0(\xi -1)}y }
        f_{\SINR{E}}(y) dy }\limits_{J_{111}}
\end{align}
where again  $f_{\gamma_{E}}(y)$ is given in \eqref{eq:PDF_Y} and
\begin{align}
    J_{12}&= \int \limits_{0}^{\infty} F_{\gamma_{P}}(y) f_{\gamma_{E}}(y) dy
          =\Pr\left\{ \SINR{P} < \SINR{E} \right\} = 1- P_{ex}
\end{align}
Moreover, by setting $A_1=\frac{\exp\left( - \frac{\xi -1}{B_0} \right)}{1+A_0(\xi -1)}$ and $D_1=\frac{A_0 \xi}{1+A_0(\xi -1)}$ in \eqref{eq:J_11}, we can rewrite $J_{111}$ as
\begin{align}
    J_{111} &= A_1 \int \limits_{0}^{\infty} \frac{\exp\left( - \frac{\xi}{B_0} y \right)}{1+ D_1 y}
    \left[ \frac{D_0 \exp\left( - \frac{y}{E_0}  \right)}{(1+D_0 y)^2}
    +
    \frac{\exp\left( - \frac{y}{E_0}  \right)}{E_0 (1+D_0 y)} \right] dy
    \nonumber\\
    &=
    \underbrace{A_1 D_0 \int \limits_{0}^{\infty} \frac{\exp\left( - \frac{y}{B_1}  \right)}
    {(1+ D_1 y)(1+D_0 y)^2} dy
    }\limits_{K_1}
    +
     \underbrace{
     \frac{A_1}{E_0} \int \limits_{0}^{\infty}
     \frac{\exp\left( -  \frac{y}{B_1} \right)}
    {(1+ D_1 y)(1+D_0 y)} dy
    }\limits_{K_2}
\end{align}
where $\frac{1}{B_1}=\frac{\xi}{B_0} + \frac{1}{E_0}$. Further, $K_1$ and $K_2$ can be  obtained as follows:
\begin{itemize}
    \item If $D_1=D_0$, $K_1$ and $K_2$ are  calculated with the help of \cite[Eq.(3.353.2)]{GradshteynRyzhik2007} and \cite[Eq.(3.353.3)]{GradshteynRyzhik2007}, respectively, as
        \begin{align}
            \label{eq:K_1_D0=D1}
            K_1&= D_0 A_1 \int \limits_{0}^{\infty} \frac{\exp\left( - \frac{y}{B_1} \right)}{(1+D_0 y)^3} dy
            =A_1\left\{ \frac{1}{2} -\frac{1}{2D_0 B_1} +\frac{\exp\left( \frac{1}{D_0 B_1}\right)}{2 D^2_0 B^2_1}  \Gamma\left[0,\frac{1}{D_0 B_1}\right]\right\}
            \\
            \label{eq:K_2_A0=D1}
            K_2&=\frac{A_1}{E_0}\int \limits_{0}^{\infty} \frac{\exp\left( - \frac{y}{B_1} \right)}{(1+D_1 y)^2} dy
            =
            \frac{A_1}{D_0 E_0} + \frac{A_1}{D_0^{2} E_0 B_1}\exp\left( \frac{1}{D_0 B_1} \right) E_i \left( - \frac{1}{D_0 B_1} \right)
        \end{align}
    \item If $D_1\neq D_0 $, we can obtain $K_1$ and $K_2$, respectively, as follows:
        \begin{align}
            \label{eq:K_1_D0<>D1}
            K_1
            &=D_0 A_1 \int \limits_{0}^{\infty} \frac{\exp\left( - \frac{y}{B_1} \right)}{(1+D_0 y)^2 (1+ D_1 y)}dy = K_{11} - K_{12} + K_{13}
            \\
            \label{eq:K_2_A0<>D1}
            K_2&= \frac{A_1}{E_0}\int \limits_{0}^{\infty} \frac{\exp\left( - \frac{1}{B_1} \right)}{(1+D_1 y)(1+D_0 y)}dy = K_{21} -K_{22}
         \end{align}
         where $K_{11}$, $K_{12}$, $K_{13}$, $K_{21}$, and $K_{22}$ are calculated as follows:
         \begin{align}
            \label{eq:K_11}
            K_{11}&= \frac{ A_1 D^2_0 }{(D_0 - D_1)} \int \limits_{0}^{\infty} \frac{\exp\left( - \frac{y}{B_1} \right)}{(1+D_0y)^2 }dy\nonumber\\
            &= \frac{D_0 A_1 }{D_0 -D_1}+ \frac{A_1  }{B_1(D_0 -D_1)} \exp\left( \frac{1}{D_0 B_1} \right)
            E_i\left[ - \frac{1}{D_0 B_1 } \right]
            \\
            \label{eq:K_12}
            K_{12}&=\frac{ A_1 D^2_0 D_1}{(D_0 - D_1)^2} \int \limits_{0}^{\infty} \frac{\exp\left( - \frac{y}{B_1} \right)}{1+D_0 y}dy
            = \frac{A_1 D_0 D_1 }{(D_0 - D_1)^2} \exp\left(\frac{1}{D_0 B_1 }  \right) \Gamma\left[0, \frac{1}{D_0 B_1 } \right]
           \\
           \label{eq:K_13}
            K_{13}&=\frac{ A_1 D_0 D^2_1}{(D_0 - D_1)^2} \int \limits_{0}^{\infty} \frac{\exp\left( - \frac{y}{B_1} \right)}{1+D_1 y}dy
            = \frac{A_1 D_0 D_1 }{(D_0 - D_1)^2} \exp\left( \frac{1}{B_1 D_1} \right)\Gamma\left[ - \frac{1}{B_1 D_1} \right]
            \\
            \label{eq:K_21}
            K_{21}&=\frac{D_0 A_1}{E_0 (D_0 - D_1)} \int \limits_{0}^{\infty} \frac{ \exp\left( -\frac{y}{B_1} \right) }{1+D_0 y}dy=\frac{A_1}{E_0(D_0 - D_1)} \exp\left( \frac{1}{D_0 B_1}\right) \Gamma\left[0,\frac{1}{D_0 B_1}\right]
            \\
            \label{eq:K_22}
            K_{22}&=\frac{A_1 D_1}{E_0 (D_0 - D_1)} \int \limits_{0}^{\infty} \frac{ \exp\left( -\frac{y}{B_1} \right) }{1+D_1 y}dy= \frac{A_1}{E_0 (D_0 - D_1)} \exp\left( \frac{1}{B_1 D_1} \right) \Gamma\left[ 0, \frac{1}{B_1 D_1} \right]
        \end{align}
        It is noted that $K_{11}$ is solved using  \cite[Eq.(3.353.3)]{GradshteynRyzhik2007} while $K_{12}$, $K_{13}$, $K_{21}$, $K_{22}$ are reached with the help of \cite[Eq.(3.352.4)]{GradshteynRyzhik2007}.
\end{itemize}

Then, the final expression of $P_{\text{out,sec}}$ is obtained as
\begin{itemize}
\item For $D_1= D_0 $,
\begin{align}\label{eq:finaloutone}
        P_{\text{out,sec}}&=1-\frac{A_{1}}{2}+ \frac{A_{2}}{2D_{0}B_{1}}-\frac{A_{1}}{2D_{0}^{2}B_{1}^{2}} \exp\Big(\frac{1}{D_{0}B_{1}}\Big)\Gamma\left[0,\frac{1}{D_0 B_1}\right]\nonumber\\
      & - \frac{A_{1}}{D_{0}E_{0}}-\frac{A_{1}}{D_{0}^{2}E_{0}B_{1}}\exp\Big(\frac{1}{D_{0}B_{1}}\Big)\text{Ei}\left(-\frac{1}{D_0B_1}\right)
 \end{align}
\item For $D_1\neq D_0 $,
     \begin{align}   \label{eq:finalouttwo}
        P_{\text{out,sec}}&=1-\frac{D_0 A_1 }{D_0 -D_1}- \frac{A_1  }{B_1(D_0 -D_1)} \exp\left( \frac{1}{D_0 B_1} \right)
            E_i\left[ - \frac{1}{D_0 B_1 } \right]\nonumber\\
           & + \frac{A_1 D_0 D_1 }{(D_0 - D_1)^2} \left\{
        \exp\left(\frac{1}{D_0 B_1 } \right) \Gamma\left[0, \frac{1}{D_0 B_1 } \right] -\exp\left( \frac{1}{B_1 D_1} \right)\Gamma\left[ - \frac{1}{B_1 D_1} \right]\right\}\nonumber\\
          &+
           \frac{A_1}{E_0(D_0 - D_1)} \left\{ \exp\left( \frac{1}{B_1 D_1} \right) \Gamma\left[ 0, \frac{1}{B_1 D_1} \right]-   \exp\left( \frac{1}{D_0 B_1}\right) \Gamma\left[0,\frac{1}{D_0 B_1}\right]\right\}
    \end{align}
   \end{itemize}

\section{NUMERICAL RESULTS}\label{sec:V}
In this section, the numerical results are presented to analyze the impact of primary network parameters, \gls{su} maximum transmit power limit and channel mean powers among users on the system performance. Further, we also study the effect of the presence of the \gls{su} on the primary network security.
Unless otherwise stated, the following system parameter is used for both
simulation and analysis: system bandwidth $B=5$ MHz, e.g., bandwidth of UMTS or LTE channel.
\subsection{SU SEP}
Fig. \ref{fig:SEP_VariChannelGain} illustrates the \gls{su} \gls{sep} for BPSK modulation  with different values of the \gls{su} maximum transmit \gls{snr} $\gamma_{\max}$, $\gamma_{\max}=P_{\text{pk}}/N_{0}$, and primary network setting parameters.
\begin{itemize}
    \item Case $1$: It is observed that the \gls{su} \gls{sep} decreases with respect to the increase of the \gls{pu} transmit \gls{snr}, $P_{p}/N_{0}$. This is due to the fact that when $P_{p}/N_{0}$ increases, the \gls{su-tx} transmit \gls{snr} also increases following \eqref{eq:AdaptivePowerofSU-Tx}. However, as $P_{p}/N_{0}$ increases further, e.g. $P_{p}/N_{0}>8$ dB, the \gls{su-tx} transmit \gls{snr} can not increase further as it is bounded by $\gamma_{\max}$. As a result, the  \gls{pu} transmit \gls{snr} become a strong interference source to the \gls{su} which leads to the increase of the \gls{su} \gls{sep}.
  \item Case $2$: We set $\gamma_{\max}=10$ dB, and then compare the change of the \gls{sep} to the Case $1$ where $\gamma_{\max}=15$ dB. It is easy to see that the \gls{sep} is obtained optimal value at $P_{p}/N_{0}= 2$ dB and then increase rapidly as \gls{pu} transmit \gls{snr} increases further, $P_{p}/N_{0}>2$. Clearly, the higher $\gamma_{\max}$ is, the degradation of the \gls{sep} is slower.
 \end{itemize}

 To observe the impact of the  \gls{pu} target rate $r_{p}$ and outage threshold $\theta_{th}$ on the \gls{sep}, we consider two following cases:
\begin{itemize}
    \item Case $3$: By increasing  $r_{p}=32$ Kbps (Case $1$) to $r_{p}=42$ Kbps, the \gls{su} \gls{sep} increases, i.e, the system performance decreases. This can be explained by the fact that increase of $r_{p}$ leads to
        higher \gls{sinr} at the \gls{pu-rx}. Accordingly, the \gls{su} transmit \gls{snr} must decrease
        to satisfy the \gls{pu} outage constraint, and this results in the degradation of the \gls{su} \gls{sep}.
   \item Case $4$: We compare the \gls{su} \gls{sep} with \gls{pu} outage constraint $\theta_{th}=0.03$ to
   Case $1$ with $\theta_{th}=0.01$. Clearly, the \gls{su} \gls{sep} is decreased due to the relaxing of the \gls{pu} outage constraint.
  \end{itemize}

In Fig. \ref{fig:SEP_VariPpk}, the impact of the channel mean powers  of the interference links between primary and secondary networks and \gls{pu-tx}$\rightarrow$\gls{pu-rx} link on the \gls{su} \gls{sep} are illustrated.
\begin{itemize}
    \item Cases $5, 6$ and $7$: It can be observed that the \gls{su} \gls{sep} becomes high when the channel mean powers of both \gls{su-tx}$\rightarrow$\gls{pu-rx} and \gls{pu-tx}$\rightarrow$\gls{su-rx} interference links increase. In particular, when the channel power of the SU-Tx$\to$PU-Rx link increases $\Omega_{\alpha}=0.5$ in Case $5$ to  $\Omega_{\alpha}=2$ in Case $7$, the \gls{su} \gls{sep} is high. This is due to the fact that when $\Omega_{\alpha}$ is high, the \gls{pu-rx} suffers strong interference from the \gls{su-tx}. Accordingly, the \gls{su-tx} must reduce its transmit power to guarantee the \gls{pu} outage constraint. It is also seen that by increasing the channel mean power of the \gls{pu-tx}$\rightarrow$\gls{su-rx} from   $\Omega_{\beta}=0.5$ (Case $6$)  to $\Omega_{\beta}=2$ (Case $7$), the \gls{su} \gls{sep} becomes high. In this case, the \gls{pu-tx} becomes a interference source to the \gls{su-rx} which results in the degradation of the secondary network performance.
   \item Case $8$: We can also observe that the channel mean power of the \gls{pu-tx}$\rightarrow$\gls{pu-rx} plays an important role on the secondary network performance. For instance, by increasing $\Omega_{h}=4$ (Case $7$) to $\Omega_{h}=6$ (Case $8$), the \gls{su} \gls{sep} decreases significantly. This can be explained by the fact that when $\Omega_{h}$ increases, the \gls{pu} outage probability decreases resulting in the increase of the the \gls{su} transmit \gls{snr}.
    \end{itemize}
In addition,  the \gls{su} \gls{sep} decreases as $\Omega_{\beta}$ decreases as shown in Fig. \ref{fig:SEP_VarMod} for different modulation schemes. Therefore, as expected, the secondary network performance is degraded as the channel mean powers of the interference links between primary and secondary networks become high. It can be noted that the above results in Figs. \ref{fig:SEP_VariChannelGain}, \ref{fig:SEP_VariPpk} and  \ref{fig:SEP_VarMod} are in accordance with the \gls{su} transmit power policy  given in \eqref{eq:AdaptivePowerofSU-Tx}.

\subsection{Probability of Existence of Non-zero Secrecy Capacity of the PU}
Fig. \ref{fig:nonzero1} and Fig. \ref{fig:nonzero21} illustrate the probability of existence of secrecy capacity of the \gls{pu}. We can see that this probability does not change with the increase of the \gls{pu-tx} transmit \gls{snr} for the case of identical channel mean powers  and for different values of the \gls{su} maximum transmit \gls{snr}. In fact, the probability of existence of secrecy capacity  strongly depends on the channel condition of the \gls{su-tx}$\rightarrow$\gls{eav} link. It can be observed that the primary network security is enhanced when the channel mean power of the interference link \gls{su-tx}$\to$\gls{eav} $\Omega_{\varphi}$ increases. For example, the  probability of existence of secrecy capacity  increases significantly in Fig. \ref{fig:nonzero1} by increasing $\Omega_{\varphi}=4$  to  $\Omega_{\varphi}=7, 10$ and from   $\Omega_{\varphi}=4$ to  $\Omega_{\varphi}=8$ in Fig. \ref{fig:nonzero21}, respectively. Here, the \gls{su-tx} becomes a strong interference source to the \gls{eav} which degrades the received \gls{sinr} at the \gls{eav}, and hence the primary network security becomes high. Moreover, we can see from the Fig. \ref{fig:nonzero21} that when $\Omega_{\alpha}$ decreases, the primary network security is also improved. This is can be explained by  the fact that decreasing $\Omega_{\alpha}$ results in the increase of the \gls{su-tx} transmit \gls{snr} which results in high interference to the \gls{eav}. Thus, curves in Fig. \ref{fig:nonzero1} and Fig. \ref{fig:nonzero21} show that the presence of the \gls{su} contributes significantly  to the primary network security.

\subsection{Outage Probability of Secrecy Capacity of the PU}
Fig. \ref{fig:outage1} and Fig. \ref{fig:outage22} illustrate the outage probability of secrecy capacity of the \gls{pu}.
\begin{itemize}
    \item Cases $9$ and $10$: As discussed for the probability of existence of secrecy capacity in Fig. \ref{fig:nonzero1}, it can also be observed in Fig. \ref{fig:outage1} that the outage probability of secrecy capacity does not change with the increase of the \gls{pu-tx} transmit \gls{snr} for the case of identical channels.
 \item Cases $11$ and $12$: When the channel mean power of the \gls{su-tx}$\rightarrow$\gls{eav} link increases, e.g., $\Omega_{\varphi}=8$ in both cases,  the primary network security is improved compared to  Cases $9$ ($\Omega_{\varphi}=2$) and $10$ ($\Omega_{\varphi}=4$), respectively.
    \end{itemize}
Furthermore, Fig. \ref{fig:outage22} shows that the outage probability of secrecy capacity  decreases as the channel mean power of the  \gls{su-tx}$\rightarrow$\gls{pu-rx} link decreases, $\Omega_{\alpha}=2$ to $0.5$. Again, the \gls{su-tx} transmit \gls{snr} increases due to the decrease of $\Omega_{\alpha}$ and hence the interference from the \gls{su-tx} to \gls{eav} becomes high. Therefore, results illustrated in Fig. \ref{fig:nonzero1}, Fig. \ref{fig:nonzero21}, Fig. \ref{fig:outage1}, and Fig. \ref{fig:outage22} reveal that the primary network security  strongly depends on  the channel condition of the \gls{su-tx}$\rightarrow$\gls{eav} and  \gls{su-tx} transmit power policy. In addition, the outage probability of secrecy capacity decreases as the channel mean power of the  \gls{pu-tx}$\rightarrow$\gls{pu-rx} link increases, e.g., $\Omega_{h}=4$ to $8$ with $\Omega_{\varphi}=4$ as shown in Fig. \ref{fig:outage22}.  This is expected since the \gls{pu-tx}$\rightarrow$\gls{pu-rx} link becomes better than the \gls{pu-tx}$\rightarrow$\gls{eav} link in this scenario.
\section{CONCLUSIONS}\label{sec:VI}
In this paper, we have studied the performance of a \gls{crn} under the joint constraint of the \gls{pu} outage and maximum transmit power limit of the \gls{su}. The considered model is also a typical \gls{d2d} communication model where the \gls{pu-tx}$\rightarrow$\gls{pu-rx} link is an instance of uplink or downlink of cellular network while the \gls{su-tx}$\rightarrow$\gls{su-rx} link is the instance of \gls{d2d} communication link. Accordingly, the adaptive transmit power for the \gls{su-tx} and analytical expression for the \gls{su} \gls{sep} has been derived. Further, analytical expressions of the outage probability of secrecy capacity and probability of existence of non-zero secrecy capacity of the \gls{pu} have been  obtained. In addition, the impact of the channel conditions among users, \gls{su} peak transmit power on the system performance is investigated. Most importantly, our results indicate that the primary network security strongly depends on the channel conditions of the SU-Tx$\to$EAV link and SU  transmit power policy. Also, it reveals that the presence of the \gls{su} contributes to the primary network security enhancement. The obtained results may provide valuable information to operators and system designers in a spectrum sharing \gls{crn} where the  \gls{pu} and \gls{su} can cooperate to combat the security attack.
\bibliographystyle{IEEEtran}
\bibliography{IEEEabrv,arvreference}
\clearpage
\begin{figure}[h!]
     \centerline{\includegraphics[width=0.8 \textwidth]{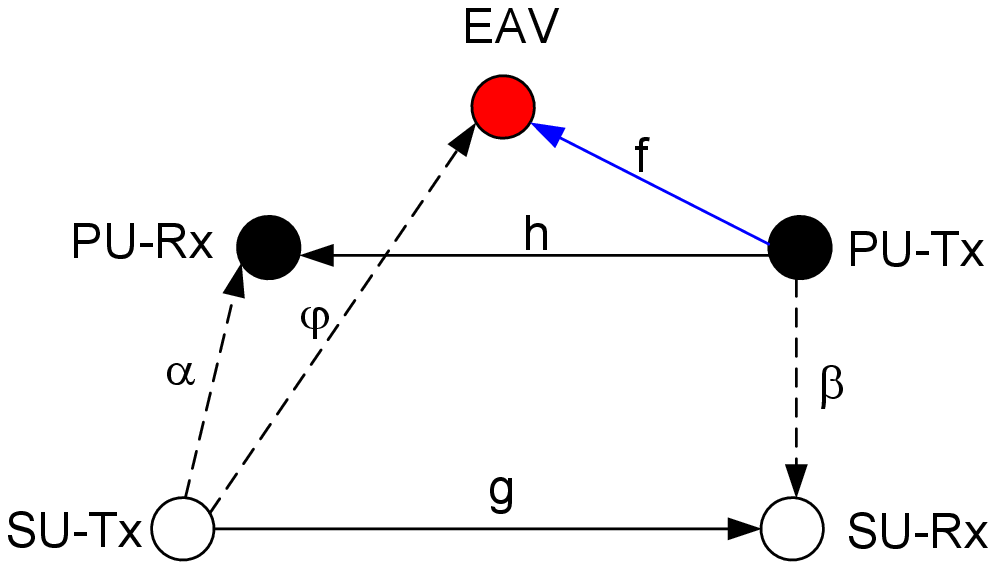}}
     \caption{
            A system model of cognitive radio network in which SU and PU share the same spectrum while an EAV  illegally  listens to the PU communication (dashed lines: Interference links; solid lines: Data information links).
        }
        \label{fig:SystemModel}
    \end{figure}
\clearpage
 \begin{figure}[h!]
     \centerline{\includegraphics[width=0.8 \textwidth]{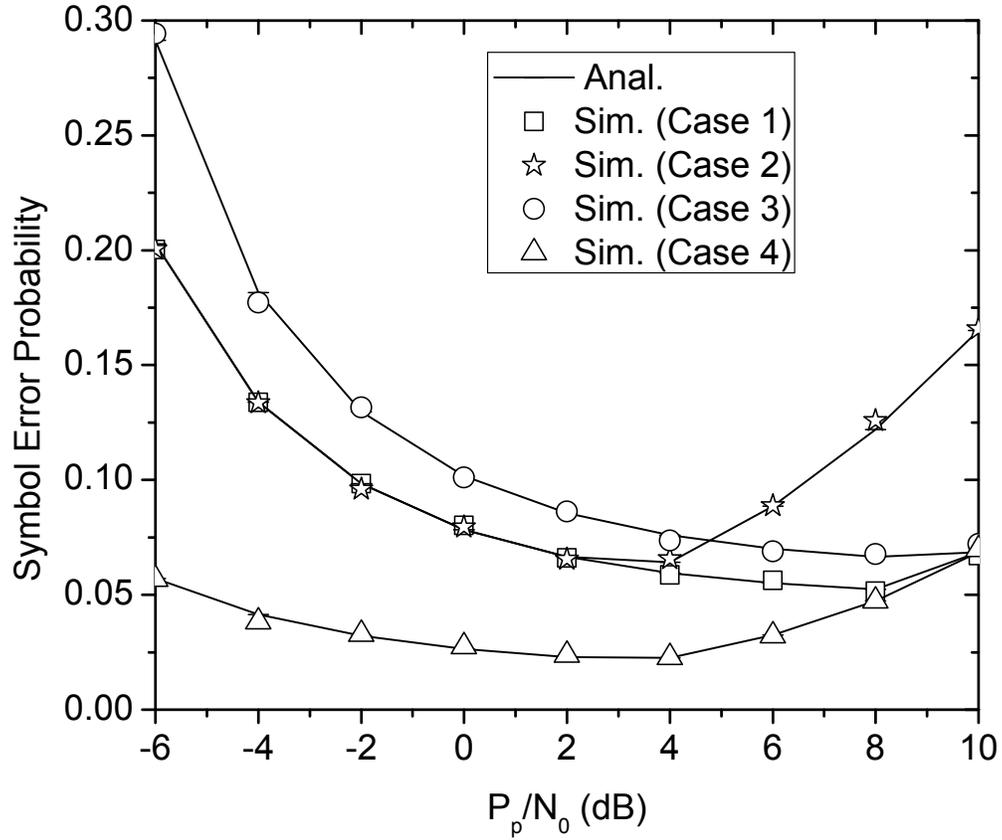}}
     \caption{SU SEP versus PU transmit SNR with BPSK modulation scheme, $\Omega_{g}=\Omega_{h}=4$ and $\Omega_{\alpha}=\Omega_{\beta}=2$. Case $1$: $\gamma_{\max}=15$ dB, $r_{p}=32$ Kbps, $\theta_{th}=0.01$; Case $2$: $\gamma_{\max}=10$ dB, $r_{p}=32$ Kbps, $\theta_{th}=0.01$;  Case $3$: $\gamma_{\max}=15$ dB, $r_{p}=42$ Kbps, $\theta_{th}=0.01$; Case $4$: $\gamma_{\max}=15$ dB, $r_{p}=32$ Kbps, $\theta_{th}=0.03$.
        }
        \label{fig:SEP_VariChannelGain}
    \end{figure}
 \clearpage
 \begin{figure}[h!]
     \centerline{\includegraphics[width=0.6 \textwidth]{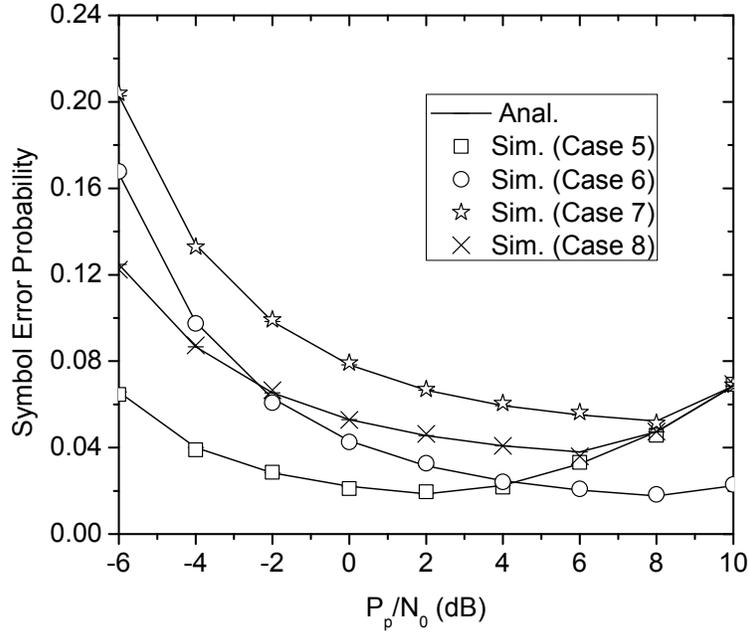}}
     \caption{SU SEP versus PU
            transmit SNR with BPSK modulation scheme, $\gamma_{\max}=15$ dB, $r_{p}=32$ Kbps,  $\theta_{th}=0.01$ and $\Omega_{g}=4$.  Case $5$: $\Omega_{h}=4$, $\Omega_{\alpha}=0.5$, $\Omega_{\beta}=2$; Case $6$: $\Omega_{h}=4$, $\Omega_{\alpha}=2$, $\Omega_{\beta}=0.5$; Case $7$: $\Omega_{h}=4$, $\Omega_{\alpha}=\Omega_{\beta}=2$; Case $8$: $\Omega_{h}=6$, $\Omega_{\alpha}=2$, $\Omega_{\beta}=2$.
        }
        \label{fig:SEP_VariPpk}
    \end{figure}
 \clearpage
 \begin{figure}[h!]
     \centerline{\includegraphics[width=0.6 \textwidth]{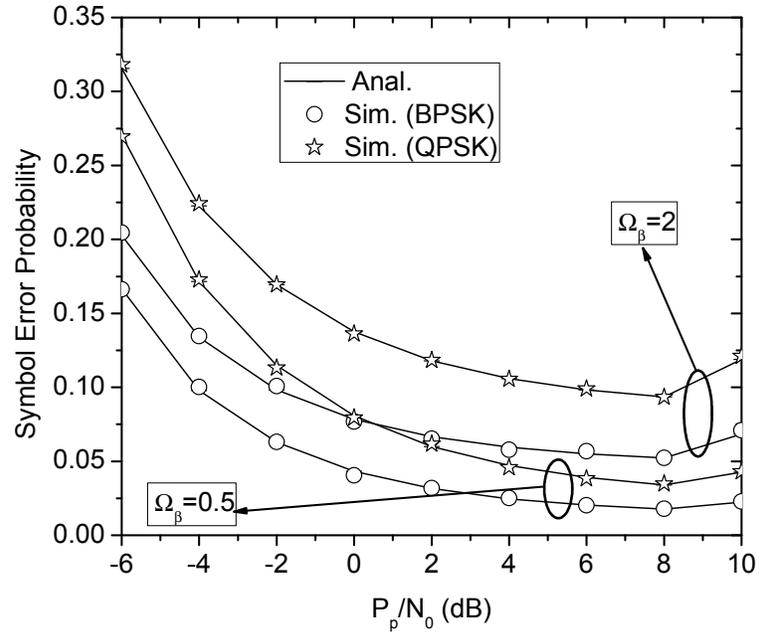}}
     \caption{SU SEP versus PU
            transmit SNR with $\gamma_{\max}=15$ dB, $r_{p}=32$ Kbps,  $\theta_{th}=0.01$,  $\Omega_{g}=\Omega_{h}=4$ and $\Omega_{\alpha}=2$.
        }
        \label{fig:SEP_VarMod}
    \end{figure}
 \clearpage
  \begin{figure}[h!]
     \centerline{\includegraphics[width=0.6 \textwidth]{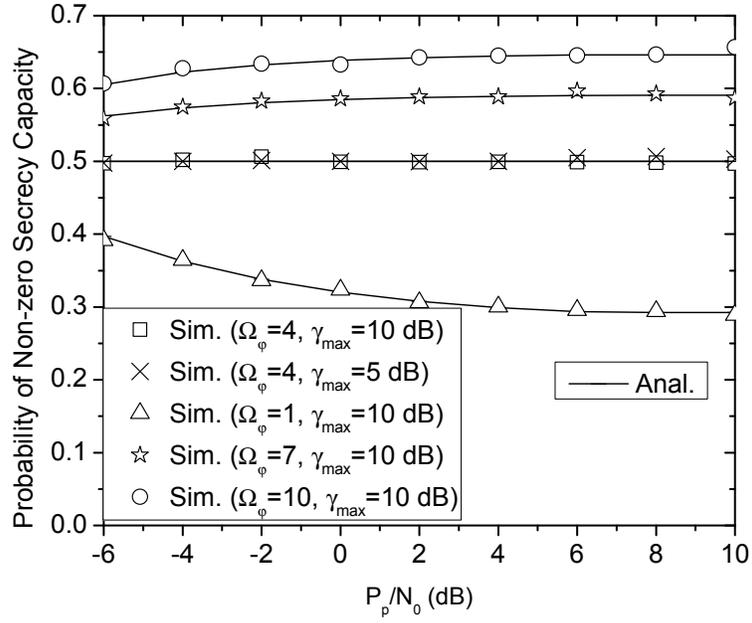}}
     \caption{Probability of existence of a non-zero secrecy capacity of the PU
            versus PU transmit SNR with  $r_{p}=32$ Kbps,  $\theta_{th}=0.01$ and $\Omega_{f}=\Omega_{g}=\Omega_{h}=\Omega_{\alpha}=\Omega_{\beta}=4$. }
        \label{fig:nonzero1}
    \end{figure}
  \clearpage
\begin{figure}[h!]
     \centerline{\includegraphics[width=0.6 \textwidth]{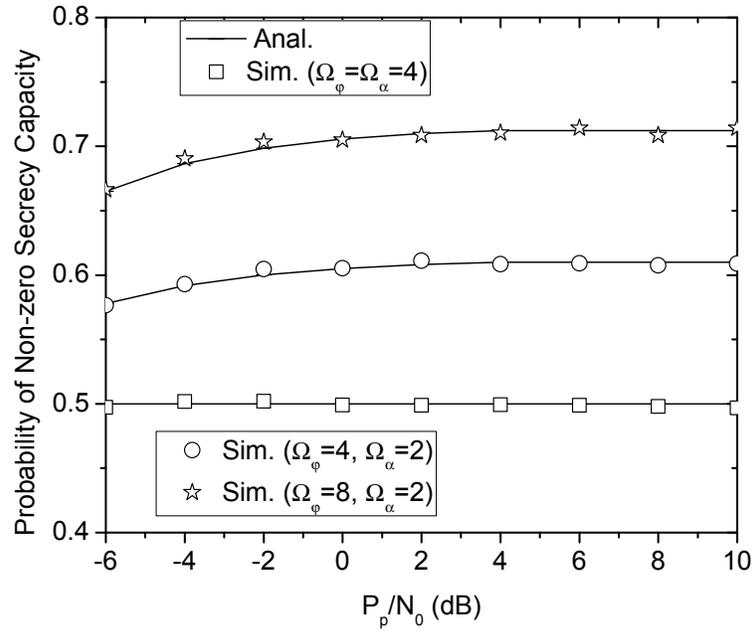}}
     \caption{Probability of existence of  non-zero secrecy capacity of the PU
            versus PU transmit SNR with $\gamma_{\max}=15$ dB, $r_{p}=32$ Kbps,  $\theta_{th}=0.01$ and $\Omega_{f}=\Omega_{g}=\Omega_{h}=\Omega_{\beta}=4$. }
        \label{fig:nonzero21}
    \end{figure}
    \clearpage
\begin{figure}[h!]
     \centerline{\includegraphics[width=0.6 \textwidth]{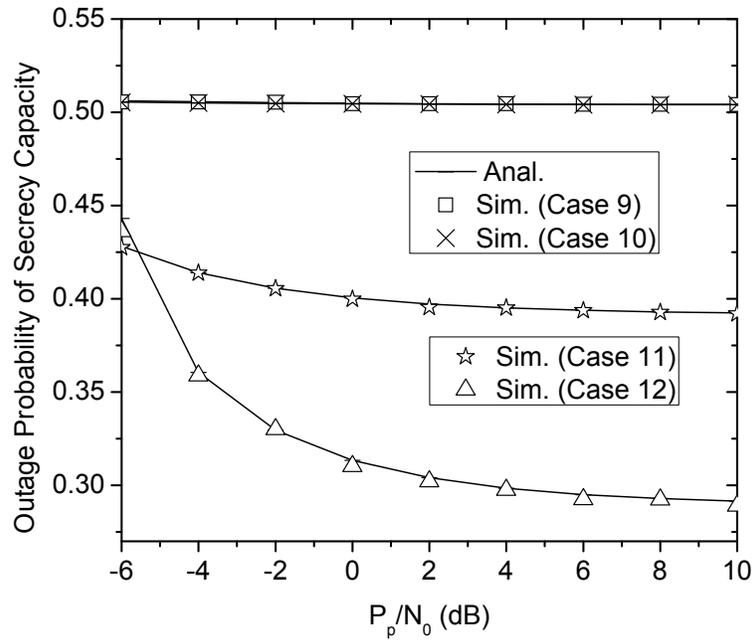}}
     \caption{Outage probability of secrecy capacity of the PU
            versus PU transmit SNR with  $R_{s}=r_{p}=32$ Kbps,  $\theta_{th}=0.01$ and $\gamma_{\max}=15$ dB.  Case $9$: $\Omega_{f}=\Omega_{h}=\Omega_{\alpha}=\Omega_{\varphi}=2$; Case $10$: $\Omega_{f}=\Omega_{h}=\Omega_{\alpha}=\Omega_{\varphi}=4$; Case $11$: $\Omega_{f}=\Omega_{h}=\Omega_{\alpha}=4$, $\Omega_{\varphi}=8$; Case $12$: $\Omega_{f}=\Omega_{h}=\Omega_{\alpha}=2$, $\Omega_{\varphi}=8$.}
        \label{fig:outage1}
    \end{figure}
    \clearpage 
\begin{figure}[h!]
     \centerline{\includegraphics[width=0.6 \textwidth]{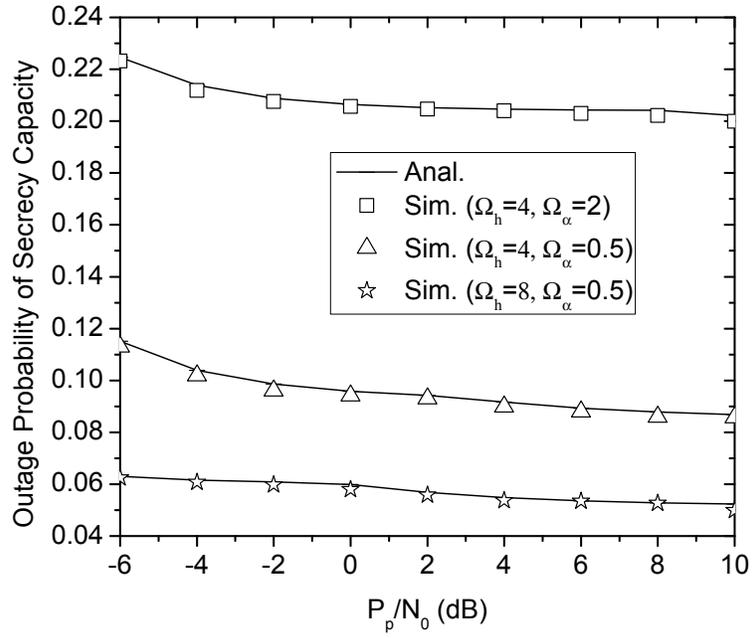}}
     \caption{Outage probability of secrecy capacity of the PU
            versus PU transmit SNR with  $R_{s}=r_{p}=32$ Kbps,  $\theta_{th}=0.01$,  $\gamma_{\max}=15$ dB and  $\Omega_{f}=\Omega_{\varphi}=4$.}
        \label{fig:outage22}
    \end{figure}
\end{document}